\DeclareMathOperator{\maxCF}{maxCF}
\DeclareMathOperator{\QTCF}{QTCF}
\newtheorem{theorem}{Theorem}
\newtheorem{lem}{Lemma}
\newtheorem{definition}{Definition}
\newtheorem{remark}{Remark}
\newtheorem{prop}{Proposition}
\newtheorem{cor}{Corollary}
\theoremstyle{definition}
\newtheorem{eg}{Example}
\newtheorem*{eg*}{\Cref{eg:simplest_nontrivial} cont}
\theoremstyle{remark}
\newcommand\TC[1]{#1}
\newcommand\AC[1]{#1}
\newcommand{\oxfordaddress}{Department of Materials,
University of Oxford,
Parks Road,
Oxford OX1 3PH,
United Kingdom}
\begin{document}

\title{\TC{An Entanglement Monotone from} the Contextual Fraction}

\author{Tim Chan}
\email{timothy.chan@materials.ox.ac.uk}
\affiliation{\oxfordaddress}

\author{Andrei Constantin}
\email{andrei.constantin@physics.ox.ac.uk}
\affiliation{School of Mathematics, University of Birmingham, Watson Building, Birmingham B15 2TT, UK\\
Rudolf Peierls Centre for Theoretical Physics, University of Oxford, Parks Road, Oxford OX1 3PU, UK}

\begin{abstract}
The contextual fraction introduced by Abramsky and Brandenburger defines a quantitative measure of contextuality associated with empirical models,
i.e.\ tables of probabilities of measurement outcomes in experimental scenarios.
In this paper we define \TC{an entanglement monotone} relying on the contextual fraction.
We first show that any separable state is necessarily non-contextual \TC{with respect to any Bell} scenario.
Then, for \TC{2-qubit} states, we associate a \TC{state-dependent Bell scenario}
and show that the corresponding contextual fraction is \TC{an entanglement monotone},
suggesting contextuality may be regarded as a refinement of entanglement.
\TC{We~call this monotone the \emph{quarter-turn contextual fraction} (QTCF),
and use it to set an upper bound of approximately \num{0.601} for the minimum entanglement entropy needed to guarantee contextuality with respect to some Bell scenario.}
\end{abstract}

\maketitle


\section{Introduction}\label{Introduction}
Multiple quantum computing algorithms have been shown to significantly outperform their fastest-known classical counterparts~\cite{Shor1994,Shor_1997,Grover1996, Childs2002, Kempe2005, Buluta2009, Harrow2009, Brown2010,Georgescu2014,Farhi2014}. While it is difficult to systematically identify the resources responsible for quantum advantage, all the proposed algorithms leverage certain fundamental quantum phenomena such as superposition, entanglement, and contextuality, to achieve their computational advantage \cite{Jozsa2003, Curty2004, Howard2014,Raussendorf2013}. Understanding how these quantum phenomena relate to one another may provide valuable hints towards more effective and systematic strategies for algorithm design. 

The relation between entanglement and superposition is immediate: entanglement implies the presence of superposition. When quantum systems become entangled, their combined state cannot be expressed as a single product of states corresponding to each system. As such, entanglement provides a basis-independent characterisation of superposition for composite systems.

The relation between entanglement and contextuality is less obvious. Contextuality can be discussed both for single and composite quantum systems describing the dependence of measurement outcomes on the context in which they are measured, i.e.\ on the set of compatible measurements performed simultaneously. 
Various mathematical approaches to contextuality have been proposed (see Ref.~\cite{Budroni2021} for a recent review), ranging from the Kochen--Specker theorem~\cite{kochen_specker_1967}, to operational approaches~\cite{spekkens_2005}, sheaf-theoretic approaches~\cite{Abramsky2011}, the framework of contextuality-by-default~\cite{Dzhafarov2016_2}, graph and hypergraph-theoretic approaches~\cite{Cabello2014,Acin2015}, etc. While substantial overlap between the various notions of contextuality exist, the terminology and mathematical structures in each approach can be very different. Here, we focus on the sheaf-theoretic approach proposed by Abramsky and Brandenburger in Ref.~\cite{Abramsky2011}. A recent discussion of the relation between entanglement and the preparation-and-measurement-contextuality proposed by Spekkens~\cite{spekkens_2005} was given in Ref.~\cite{Plavala2022},
and between entanglement and the Kochen--Specker theorem in Ref.~\cite{wright2023contextuality}.

In the sheaf-theoretic approach, one views the measurement outcomes for a quantum system as local assignments of values defined on specific maximal sets of compatible measurements (maximal contexts). 
\AC{The central idea of this approach (as reviewed, e.g., in Ref.~\cite{Constantin2015}) is to use the standard mathematical framework of sheaf theory to model how such local measurement outcomes can or cannot be consistently glued together into a global assignment, thereby characterizing contextuality as the obstruction to such a global section.}
This idea was inspired by the work of Butterfield and Isham who showed that the Kochen--Specker theorem can be reformulated as the non-existence of global sections of a certain presheaf~\cite{ButterfieldIsham1998}. In this framework, the connection between nonlocality and contextuality becomes manifest, nonlocality being expressed as a particular case of contextuality~\cite{Fine1982, Abramsky2011}.

\AC{As discussed in this paper,} the connection between
superposition,
entanglement
and sheaf-theoretic contextuality \TC{with respect to local measurements for} composite states
is given by the chain of implications:
\begin{equation*}
\text{contextual}\Rightarrow\text{entangled}\Rightarrow\text{superposition},
\end{equation*}
which we explain in \cref{sec:CimpliesE}.
Furthermore, to quantify the relation between entanglement and contextuality
we use the standard entanglement entropy and the contextual fraction of Refs.~\cite{Abramsky2011, Abramsky2017}.

The role played by entanglement and contextuality in quantum computation has been recognised for a long time. Entanglement is a key ingredient in teleportation~\cite{Bennett1993} and cryptography~\cite{Ekert1991} protocols, and is necessary for quantum speedup~\cite{Jozsa2003}.
However, Van den Nest showed that a large quantity of entanglement is not necessary, suggesting perhaps  that entanglement alone is not the source of this speedup \cite{VandenNest2013}. In contrast, contextuality was proven to be necessary for quantum speedup in two computational paradigms \cite{Howard2014,BermejoVega2017}, the effectiveness of one (measurement-based quantum computation) even depending directly on the amount of contextuality~\cite{Abramsky2017}. 

The paper is structured as follows. After reviewing the necessary background in \cref{sec:background}, in \cref{sec:CimpliesE} we show that for arbitrary composite systems, sheaf-theoretic contextuality \TC{of local measurements} implies entanglement.
In \cref{sec:methods}, we focus on \TC{2-qubit} systems
and show that the contextual fraction computed with respect to a measurement scenario uniquely determined by the state
\TC{is an entanglement monotone.
We call this function the \emph{quarter-turn contextuality fraction} (QTCF).
Lastly,
we give an upper bound on the minimum entanglement entropy that a 2-qubit state must have in order to be contextual with respect to some Bell scenario.}

\section{Background}\label{sec:background}
In this section, we set the notation and introduce the basic concepts needed in the following sections. \AC{The entire discussion refers to pure states.}

\vspace{8pt}
\noindent{\bfseries Single Qubits.} We specify qubits by the standard Bloch sphere angles:
\begin{equation}\label{eq:Bloch}
\ket{\theta, \phi} :=
\cos\tfrac\theta2 \ket{0} +\e^{\i \phi}
\sin\tfrac\theta2 \ket{1}
\end{equation}
where
$\theta \in [0, \pi]$,
$\phi \in [0, 2 \pi)$,
and $`{\ket{0}, \ket{1}}$ denote, as usual, an orthonormal basis (ONB). In particular, any antipodal vectors
$`{\ket{\theta, \phi}, \ket{\pi -\theta, \pi +\phi}}$
form an ONB denoted by $\mathcal B(\theta, \phi)$. Moreover, any single-qubit ONB arises in this way. 
In this notation, the Pauli $z$ basis is $\mathcal B_z :=\mathcal B(0,0)=`{\ket{0}, \ket{1}}$,
while the Pauli $x$ and $y$ bases are
$\mathcal B_x :=\mathcal B(\pi/2,0) =`{\ket{0_x}, \ket{1_x}}$ and
$\mathcal B_y :=\mathcal B(\pi/2, \pi/2) =`{\ket{0_y}, \ket{1_y}}$,
respectively.

\begin{eg}
For later reference, the Pauli $x$ and $y$ bases rotated about the $z$-axis by $\pi/8$ will be denoted by
\begin{align}
\mathcal B_{ \pi/8} &:=\mathcal B`\Big(\frac \pi2, \frac  \pi 8), &
\mathcal B_{5\pi/8} &:=\mathcal B`\Big(\frac \pi2, \frac{5\pi}8).
\end{align}
\end{eg}
 
\vspace{8pt}
\noindent{\bfseries Multiple Qubits.} 
Systems of $n$ qubits correspond to vectors in $\mathcal H^{\otimes n}$, where $\mathcal H\simeq \mathbb C^2$ is the single-qubit Hilbert space. Any 2-qubit state $\ket{\psi} \in \mathcal H_\A \otimes \mathcal H_\B$ can be written as $\alpha \ket{00} +\beta \ket{11}+\gamma\ket{01}+\delta\ket{10}$.
A \emph{diagonal} state is of the form $\alpha \ket{00} +\beta \ket{11}$
and can be parametrised in a Bloch-sphere style as
\begin{equation}\label{eq:diag_state}
\ket{\diag;\theta,\phi} :=
\cos\tfrac\theta2 \ket{00} + \e^{\i \phi}
\sin\tfrac\theta2 \ket{11}.
\end{equation}
If $\phi=0$, the state is called \emph{real}.
The importance of real diagonal states comes from the following proposition which will be needed in \cref{sec:methods}.
\begin{prop}\label{prop:2Q_Schmidt}
Any 2-qubit state can be expressed as a real diagonal state rotated by local unitaries:
\begin{equation}\label{eq:2Q_Schmidt_decomposition}
\ket{\psi} =\hat U_\A \otimes \hat U_\B \ket{\diag;\theta,0},
\end{equation}
with $\theta \in [0, \pi]$. 
\end{prop}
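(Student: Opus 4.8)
The plan is to recognise Proposition~\ref{prop:2Q_Schmidt} as the two-qubit instance of the Schmidt decomposition, and to prove it by applying the singular value decomposition (SVD) to the coefficient matrix of the state. First I would expand an arbitrary state in the computational product basis as $\ket\psi = \sum_{i,j\in\{0,1\}} C_{ij}\,\ket{i}_\A\ket{j}_\B$, thereby encoding $\ket\psi$ in the $2\times 2$ complex matrix $C=(C_{ij})$. The SVD then gives $C = U D V^\dagger$ with $U,V$ unitary and $D=\mathrm{diag}(s_0,s_1)$ diagonal with real non-negative singular values $s_0,s_1\ge 0$.

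Next I would substitute this factorisation back into the expansion and regroup the sums so that the index contracted against $U$ is gathered on the $\A$ factor and the index contracted against $V^\dagger$ on the $\B$ factor:
\begin{equation*}
\ket\psi = \sum_k s_k \Big(\sum_i U_{ik}\ket{i}_\A\Big)\Big(\sum_j \overline{V_{jk}}\,\ket{j}_\B\Big).
\end{equation*}
Defining $\hat U_\A$ as the unitary with matrix elements $U_{ik}$ and $\hat U_\B$ as the unitary with matrix elements $\overline{V_{jk}}$ (the entrywise complex conjugate of a unitary is again unitary), the parenthesised vectors are exactly $\hat U_\A\ket{k}_\A$ and $\hat U_\B\ket{k}_\B$. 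Hence $\ket\psi = \hat U_\A\otimes\hat U_\B\,(s_0\ket{00}+s_1\ket{11})$, which already has the product-of-local-unitaries form required by Eq.~\eqref{eq:2Q_Schmidt_decomposition}.

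It remains to cast the diagonal factor $s_0\ket{00}+s_1\ket{11}$ into the parametrised real diagonal form of Eq.~\eqref{eq:diag_state}. Because local unitaries preserve the norm and $\{\ket{00},\ket{11}\}$ is orthonormal, normalisation of $\ket\psi$ forces $s_0^2+s_1^2=1$; together with $s_0,s_1\ge 0$ this lets me set $s_0=\cos\tfrac\theta2$ and $s_1=\sin\tfrac\theta2$ for a unique $\theta\in[0,\pi]$, giving $s_0\ket{00}+s_1\ket{11}=\ket{\diag;\theta,0}$ and completing the proof.

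I do not expect a serious obstacle here, since this is a standard result; the only points demanding care are bookkeeping rather than conceptual. The step most likely to trip one up is tracking the complex conjugation on the $\B$ subsystem---the $V^\dagger$ in the SVD produces $\overline{V_{jk}}$, so the correct local unitary on $\B$ is $\overline V$ and not $V$---and confirming that the reality (vanishing phase $\phi=0$) and the range $\theta\in[0,\pi]$ come for free from the non-negativity of the singular values.
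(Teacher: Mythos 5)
Your proof is correct and matches the paper's approach: the paper simply identifies the statement as the two-qubit case of the Schmidt decomposition without writing out a proof, and your SVD argument is precisely the standard derivation of that decomposition, with the conjugation on the $\B$ factor and the parametrisation $s_0=\cos\tfrac\theta2$, $s_1=\sin\tfrac\theta2$ handled correctly.
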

\noindent This is a particular case of the Schmidt decomposition,
where $`{\cos \tfrac \theta2, \sin \tfrac \theta2}$ are the Schmidt coefficients and $\hat U_\A `{\ket{0}, \ket{1}}$ and $\hat U_\B `{\ket{0}, \ket{1}}$, the Schmidt bases of $\ket{\psi}$.

\vspace{8pt}
\noindent{\bfseries Entanglement.}
\TC{An entanglement measure is usually defined as any function
that meets a certain list of conditions;
these can be found in, e.g., Ref.~\cite{Horodecki2000}.}
The unique entanglement measure for a
2-qubit state $\ket{\psi} \in \mathcal H_\A \otimes \mathcal H_\B$ is the entanglement entropy~\cite{Horodecki2000}, given by
\begin{align}
S_\ent (\ket{\psi}) &:=-\tr(\hat \rho_\A \lg_2 \hat \rho_\A), \label{eq:entanglement_entropy}\\
\hat \rho_\A &:=\sum_{\beta=0}^1 `<\beta \proj{\psi} \beta>, \label{eq:reduced_density}
\end{align}
where $`{\ket{\beta}}$ is any ONB for $\mathcal H_\B$ and hence $`<\beta|\psi> \in \mathcal H_\A$ is a partial inner product. $S_\ent (\ket{\psi})$ varies from 0 for a separable state to 1 for a maximally entangled state.
Intuitively, this formula indicates that the more information is lost when one qubit is ignored, the more entangled the qubits must have been. The entropy \cref{eq:entanglement_entropy} measures the information loss, and the operator \cref{eq:reduced_density} is the result of ignoring the second qubit.

\begin{eg}
A maximally entangled $n$-qubit state is $\ket{\GHZ_n} :=\tfrac1{\sqrt2}(\ket{0}^{\otimes n} +\ket{1}^{\otimes n})$, known as the GHZ state. Applying \cref{eq:entanglement_entropy}, it follows that  $S_\ent (\ket{\GHZ_2}) =1$.
\end{eg}

\TC{The definition of an {\bfseries entanglement monotone} is similar to that of an entanglement measure
but with a weaker set of conditions~\cite{Plenio2007,Horodecki2009}.
\begin{definition}\label{def:entanglement_monotone}
An entanglement monotone for pure states is any function $E(\ket{\psi})$ that:
\begin{enumerate}
\setlength{\itemsep}{0pt}
    \item is nonnegative,
    \item vanishes for separable states,
    \item does not increase under local operations and classical communication (LOCC).
\end{enumerate}
\end{definition}}

\vspace{8pt}
\noindent{\bfseries Sheaf-Theoretic Contextuality.}\label{Sheaf-Theoretic Contextuality}
We start with a basic review of the sheaf-theoretic framework introduced in Ref.~\cite{Abramsky2011}.
A measurement scenario (or `scenario' for short) is a triplet $(X, \mathcal M, O)$ where $X$ a finite set of measurements, $\mathcal M$ a set of contexts, and $O$ a finite set of outcomes for each measurement. Each context $C\in\mathcal M$ is a subset of $X$, containing compatible measurements.
For qubits, $O =`{0,1}$.

\begin{eg}\label{eg:simplest_nontrivial}
The simplest nontrivial scenario involves two parties, Alice and Bob, who each have one qubit. Alice measures her qubit in one of the two distinct ONBs, $a_1$ or $a_2$, while Bob measures his qubit in the ONB $b$. The total set of ONBs is $X =`{a_1, a_2, b}$.
The two possible contexts are $`{a_1, b}$ and $`{a_2, b}$ so $\mathcal M =`{`{a_1, b}, `{a_2, b}}$. Note that $a_1$ and $a_2$ cannot be in the same context because Alice cannot simultaneously measure in both ONBs. Finally, $O =`{0,1}$.
\end{eg}

Alice and Bob run statistics on their measurement outcomes and the results are stored in probability tables, referred to as empirical models (or `models' for short).
\Cref{tab:empirical_model_examples} gives three distinct models.

\begin{table}[H]
	\caption{Three empirical models associated with the scenario in \cref{eg:simplest_nontrivial}.
    \TC{Note these are for illustration purposes only,
    as (b) and (c) violate no-signalling.}}
	\centering
	\begin{tabular}{cc}
	(a) & (b)  \\
		$\begin{array}{|cc|cccc|}
		\hline
		\A & \B & 00 & 01 & 10 & 11 \\
		\hline
		a_1 & b & 1 & 0 & 0 & 0 \\
		a_2 & b & 1 & 0 & 0 & 0 \\
		\hline
		\end{array}$  &
		$\begin{array}{|cc|cccc|}
		\hline
		\A & \B & 00 & 01 & 10 & 11 \\
		\hline
		a_1 & b & 1 & 0 & 0 & 0 \\
		a_2 & b & 0 & 0 & 0 & 1 \\
		\hline
		\end{array}$ 
	\end{tabular}
		\begin{tabular}{ccc}
		 (c) \\
		$\begin{array}{|cc|cccc|}
		\hline
		\A & \B & 00 & 01 & 10 & 11 \\
		\hline
		a_1 & b & 1 & 0 & 0 & 0 \\
		a_2 & b & \frac12 & 0 & 0 & \frac12 \\
		\hline
		\end{array}$
	\end{tabular}
	\label{tab:empirical_model_examples}
\end{table}

\begin{definition} A model $e$ for a given scenario $(X,\mathcal M, O)$  is a set of probability distributions $e_C$, one for each context $C\in\mathcal M$, such that $e_C(\v s)$ is the probability to observe the joint outcome $\v s \in O^{C}$ in the context~$C$.
\end{definition}

\begin{remark}
A model $e$  corresponds to a probability table, $e_C$ denotes a probability distribution (a row in the probability table) and $e_C(\v s)$ a probability (an entry in the probability table).
\end{remark}

\begin{eg*}
Were it possible to perform \emph{all} measurements $`{a_1, a_2, b}$ at once, we would get a global assignment $\v g$, i.e.\ a joint outcome of \emph{all} measurements. For instance, a global assignment $\v g =010$ would imply that measurement $a_1$ yielded outcome 0, $a_2$ outcome 1, and $b$ outcome 0. Now consider the global probability distribution
\begin{equation}
d(\v g) =\begin{cases*}
1 &if $\v g =000$, \\
0 &otherwise,
\end{cases*}
\end{equation}
defined over the set $`{0,1}^3$ of global assignments.
Consider also the model of \cref{tab:empirical_model_examples}a.
Note how its probability distributions are marginals of~$d$. The fact that we can find such a global distribution is what makes \cref{tab:empirical_model_examples}a a noncontextual model. Such a global distribution does not exist for \cref{tab:empirical_model_examples}b nor \cref{tab:empirical_model_examples}c, hence they are both contextual models.
\end{eg*}

\begin{definition}
\noindent A model $e$ is noncontextual iff its probability distributions are marginals of a single distribution~$d$ over the set 
of global assignments (joint outcomes of all measurements in $X$), that is
	\begin{equation}\label{eq:NC_condition}
	\exists d: e_C =d|_C\quad \forall C\in \mathcal M,
	\end{equation}
	where $d|_C$ denotes $d$ marginalised to $C \subseteq X$. A model is contextual iff it is not noncontextual.
\end{definition}

This definition captures the intuition that probabilities of outcomes in a contextual model depend on the context, so cannot depend solely on pre-existing values. 
In the three-layer hierarchy of Refs.~\cite{Abramsky_2012, Abramsky_2014}, the above definition corresponds to the most basic notion of contextuality, called weak contextuality, the other two notions being logical and strong contextuality. 

\vspace{8pt}
\noindent{\bfseries The Contextual Fraction.} 
To measure contextuality, we consider the question: what fraction of a model $e$ is actually contextual?
The authors of Ref.~\cite{Abramsky2011} defined the contextual and noncontextual fractions of $e$ as
\begin{align}
\CF(e) &:=1 -\NCF(e), \label{eq:CF_definition} \\
\NCF(e) &:=\max`*{\lambda \in [0, 1]:e =\lambda e^\NC +(1-\lambda)e'}, \label{eq:NCF_definition}
\end{align}
where $\lambda$ is maximised by varying the models $e^\NC, e'$ with the constraint that $e^\NC$ be noncontextual.
This can be done with linear programming~ \cite{Abramsky2017}.
$\CF(e)$ measures how contextual $e$ is, varying from 0 for a noncontextual model to 1 for a strongly contextual model.
\begin{eg}
The maximum value of $\lambda$ for \cref{tab:empirical_model_examples}b is~0, corresponding to a contextual fraction of 1, so this is a strongly contextual model. \Cref{tab:empirical_model_examples}c is half of \cref{tab:empirical_model_examples}a plus half of \cref{tab:empirical_model_examples}b, and it turns out that half of \cref{tab:empirical_model_examples}a is the maximum noncontextual contribution one can find for \cref{tab:empirical_model_examples}c, so its contextual fraction is~$1/2$.
\end{eg}

\vspace{8pt}
\noindent{\bfseries Bell Scenarios.}\label{The Bell Scenario}
\TC{A scenario involving $n$ parties,
each possessing one part of an $n$-partite state
and choosing to measure their part in one of $k$ ONBs,
each choice with $l$ possible outcomes,
is called an $(n, k, l)$ Bell scenario \cite[Example 2.2.1]{Constantin2015}.
For instance,
Alice chooses from $`{a_1, a_2, \dots, a_k}$;
Bob, $`{b_1, \dots, b_k}$;
Charlie, $`{c_1, \dots, c_k}$;
etc.
So, $n$ parties with $k$ choices each means there are $k^n$ contexts.
When all parties choose from the same set of ONBs and $k =2$,
the scenario is completely specified by Alice's set $`{a_1, a_2}$
and we denote it as $\mathcal S(a_1, a_2)$.
All scenarios in this paper are Bell scenarios.
}
\begin{eg}\label{eg:max_measurements_GHZ}
$\mathcal S(\mathcal B_x, \mathcal B_y)$ is the scenario of the Pauli $x$ and $y$ bases. $\mathcal S`(\mathcal B_{\pi/8}, \mathcal B_{5 \pi/8})$ is the scenario of the Pauli $x$ and $y$ bases rotated about the $z$-axis by $\pi/8$.
\end{eg}

\noindent Given an \TC{$(n, k, l)$ Bell scenario
and an $n$-partite state $\ket{\psi}$,
we can construct a model $e$ by computing each probability $e_C(\v s), \v s \in `{0, \dots, l -1}^n$
via Born's rule e.g.\ for $n =2$:
$e_{a_1 b_2}(34) =|\braket{3_1 4_2}{\psi}|^2$.
More generally,
\begin{align}
    e_{a_i b_j \dots}(\alpha \beta \dots)
    &=|\bra{\alpha_i} \bra{\beta_j} \dots\ket{\psi}|^2 \label{eq:Born} \\
    \text{where } i, j, \dots &\in `{1, \dots, k} \notag\\
    \text{and } \alpha, \beta, \dots &\in `{0, \dots, l -1}. \notag
\end{align}}
We call $e$ the \emph{model of $\ket{\psi}$ with respect to the scenario} and refer to $\CF(e)$ as the \emph{contextual fraction of $\ket{\psi}$ with respect to the (measurements defined by the) scenario}.

\section{Contextuality implies entanglement}\label{sec:CimpliesE}

Given the above framework for contextuality, we begin the discussion on the relation between entanglement and contextuality with the following result which holds for $n$-\TC{partite states under local measurement on each party}.

\begin{lem}[Separable implies noncontextual]\label{lem:separable_implies_noncontextual}
A separable state is \TC{noncontextual with respect to any Bell} scenario.
\end{lem}
\begin{proof}
Let $\ket{\psi} =\ket{\psi_\A}\ket{\psi_\B} \dots$ be a separable state.
Substitute this into \cref{eq:Born}:
\TC{
\begin{align}
e_{a_i b_j \dots}(\alpha \beta \dots) &=|
`<\alpha_i|\psi_\A>
`<\beta_j |\psi_\B>
\dots|^2 \notag\\
&=\pr_{a_i}(\alpha) \pr_{b_j}(\beta) \dots \label{eq:product_PMF}
\end{align}
where $\pr_{a_i}(\alpha) =|`<\alpha_i|\psi_\A>|^2$ is the probability that Alice obtains outcome $\alpha$
when she chooses to measure in the ONB $a_i$.
Define a global probability distribution
\begin{align}
    d(\alpha_1 \alpha_2 \alpha_3 \dots \beta_1 \beta_2 \beta_3 \dots)
    &=`\Big[\prod_i \pr_{a_i}(\alpha_i)] \notag \\
    &\times `\Big[\prod_j \pr_{b_j}(\beta_j)] \times \dots \label{eq:product_d}
\end{align}
and consider the RHS of \cref{eq:NC_condition} for, say,
the context $C=`{a_1, b_1, \dots}$:
\begin{align}
d|_C (\alpha_1 \beta_1 \dots)
&\overset{\eqref{eq:product_d}}=\pr_{a_1}(\alpha_1) \pr_{b_1}(\beta_1) \dots \notag\\
&\overset{\eqref{eq:product_PMF}}=e_C(\alpha_1 \beta_1 \dots). \label{eq:separable_NC_condition}
\end{align}}
By similar calculations, \cref{eq:separable_NC_condition} holds for all other contexts so the noncontextuality condition \cref{eq:NC_condition} is satisfied.
\end{proof}
\begin{prop}[Contextual implies entangled]\label{prop:contextual_implies_entangled}
A~state which is contextual with respect to some \TC{Bell} scenario must be entangled.
\end{prop}
\begin{proof}
This is the contraposition of \cref{lem:separable_implies_noncontextual}.
\end{proof}

\Cref{prop:contextual_implies_entangled} indicates that contextuality gives a means for detecting entanglement.
If given a composite state \TC{$\ket{\psi}$} there exists a contextual model $e_{\psi}$
\TC{with respect to a Bell scenario},
then \TC{$\ket{\psi}$} must be entangled.
Likewise, as discussed in the Introduction, if \TC{$\ket{\psi}$} is entangled,
\TC{$\ket{\psi}$} is necessarily a superposition state, irrespective of the basis used.
Hence, for composite states \TC{and local measurements}
we have the following implications:
\begin{equation}
\text{contextual}\Rightarrow\text{entangled}\Rightarrow\text{superposition}.
\end{equation}
As such, if contextual resources are needed for some quantum computation, one must use entangled states. To make this more quantitative, we need to define a measure for the capacity of a state to be contextual. We do so below by defining a unique scenario associated to the state and evaluating the corresponding contextual fraction. 

\section{An Entanglement Monotone from Contextuality}\label{sec:methods}
\TC{
It would be nice to have a contextuality quantification
that can be directly compared with entanglement entropy.
However, as discussed in \cref{sec:background},
the contextual fraction
depends not only on the state but also on the measurement scenario,
and is computed from the solution to a non-trivial optimisation problem
for which no analytical expression is known.
Ideally,
one would assign to a state the following quantity.
\begin{definition}
    The $\maxCF$ of an $n$-qubit state $\ket{\psi}$ is
    \begin{equation}
        \maxCF(\ket{\psi}) :=\max`{\CF(e): e \in M}
    \end{equation}
    where $M$ the set of all models that can be produced by $\ket{\psi}$
    with respect to an $(n, 2, 2)$ Bell scenario.
\end{definition}
This corresponds to a double optimisation problem,
where the second optimisation refers to the optimisation over all possible Bell scenarios.
However,
this is too computationally intensive even for $n=2$,
so we instead begin by considering some examples.

\subsection{2-Qubit Examples}
\noindent
\textbf{Fixing the State, Varying the Scenario.}}
It is instructive to plot (see \cref{fig:CF_GHZ}) the contextual fraction of the $\ket{\GHZ_2}$ state with respect to the scenario
\begin{equation}\label{eq:equatorial_scenario}
\mathcal S`*(\mathcal B`\Big(\frac \pi2, \phi_1), \mathcal B`\Big(\frac \pi2, \phi_2)).
\end{equation}
\begin{figure}[H]
	\centering
	\includegraphics[width=.3\textwidth]{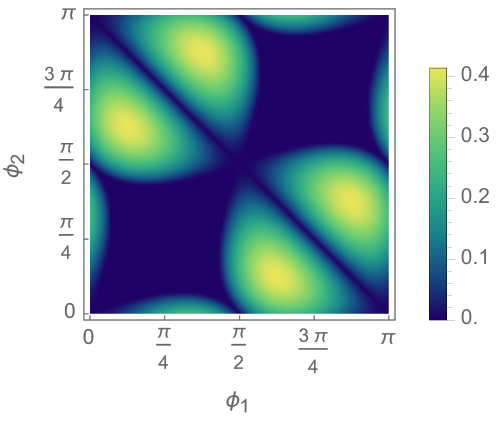}
	\caption{The contextual fraction (indicated by colour according to the colourbar on the right) of $\ket{\GHZ_2}$ with respect to \TC{the Bell scenario \cref{eq:equatorial_scenario} of} equatorial Bloch measurements.
    Each pair of azimuthal angles $(\phi_1, \phi_2)$ corresponds to a different empirical model.
	}
	\label{fig:CF_GHZ}
\end{figure}
\noindent
Note that $\ket{\GHZ_2}$ is not strongly contextual with respect to any choice of $(\phi_1, \phi_2)$.
Nonetheless, there are 4~maxima, one of which is at $(\phi_1, \phi_2) =(\pi/8, 5 \pi/8)$ corresponding to the scenario $\mathcal S`(\mathcal B_{\pi/8}, \mathcal B_{5 \pi/8})$. At these maxima, the contextual fraction is $\sqrt2 -1 \approx 0.414$.

\begin{remark}\label{rem:rotate_each_qubit}
Abramsky et al.~\cite[Proposition 4]{Abramsky2017} show that rotating each qubit about the $z$-axis by an angle $\varphi$ is equivalent to introducing a relative phase $2\varphi$ between the two terms in a diagonal state. More concretely, $(\phi_1, \phi_2) \to (\phi_1 +\varphi, \phi_2 +\varphi)$ in \cref{eq:equatorial_scenario} is equivalent to $\phi \to \phi +2 \varphi$ in \cref{eq:diag_state}. Thus the model of $\ket{\GHZ_2}$ with respect to $\mathcal S`(\mathcal B_{\pi/8}, \mathcal B_{5 \pi/8})$ is identical to that of $\ket{\diag; \pi/2, \pi/4}$ with respect to $\mathcal S(\mathcal B_x, \mathcal B_y)$.
\end{remark}

\noindent
\TC{\textbf{Fixing the Scenario, Varying the State.}}
As noted in \cref{prop:2Q_Schmidt}, any 2-qubit state can be rotated by local unitaries to a real diagonal state.
In this process, entanglement entropy is preserved.
\begin{lem}\label{lem:EE_general_2Q_state}
Local unitaries do not affect entanglement entropy
i.e.\ for any 2-qubit state $\ket{\psi}$:
\begin{align}
S_\ent(\ket{\psi})&= S_\ent(\ket{\diag; \theta, 0})\\
&= S_\ent(\ket{\diag; \theta, \phi})\quad\forall \phi \in \mathbb R,
\end{align}
where $\theta$ depends on $\ket{\psi}$ via \cref{eq:2Q_Schmidt_decomposition}.
\end{lem}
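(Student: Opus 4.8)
The plan is to prove the two claimed equalities separately, since $S_\ent(\ket{\psi})=S_\ent(\ket{\diag;\theta,0})$ is a special case of a general invariance principle, while $S_\ent(\ket{\diag;\theta,0})=S_\ent(\ket{\diag;\theta,\phi})$ handles the relative phase. For the first equality, I would invoke Proposition~\ref{prop:2Q_Schmidt} to write $\ket{\psi}=\hat U_\A\otimes\hat U_\B\ket{\diag;\theta,0}$, so that the statement reduces to showing that the local unitary $\hat U_\A\otimes\hat U_\B$ leaves $S_\ent$ unchanged.

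The cleanest route is to work directly with the reduced density operator in Eq.~\eqref{eq:reduced_density} and show that a local unitary $\hat U_\A\otimes\hat U_\B$ conjugates $\hat\rho_\A$ by $\hat U_\A$, leaving its spectrum fixed. Concretely, if $\ket{\psi'}=\hat U_\A\otimes\hat U_\B\ket{\psi}$ with reduced operator $\hat\rho'_\A$, then a short computation using $\proj{\psi'}=(\hat U_\A\otimes\hat U_\B)\proj{\psi}(\hat U_\A^\dagger\otimes\hat U_\B^\dagger)$ and the freedom to choose the ONB $`{\ket\beta}$ in Eq.~\eqref{eq:reduced_density} (one may trace against the rotated basis $\hat U_\B\ket\beta$) gives $\hat\rho'_\A=\hat U_\A\hat\rho_\A\hat U_\A^\dagger$. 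Since the von~Neumann entropy $-\tr(\hat\rho\log_2\hat\rho)$ depends only on the eigenvalues of $\hat\rho$, and unitary conjugation preserves eigenvalues, we get $S_\ent(\ket{\psi'})=S_\ent(\ket\psi)$. This establishes the first equality.

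For the second equality I would note that the relative phase $\phi$ in $\ket{\diag;\theta,\phi}=\cos\tfrac\theta2\ket{00}+e^{i\phi}\sin\tfrac\theta2\ket{11}$ is itself a local unitary acting on one qubit: the diagonal gate $\mathrm{diag}(1,e^{i\phi})$ applied to, say, the second qubit sends $\ket{\diag;\theta,0}$ to $\ket{\diag;\theta,\phi}$. Thus the second equality is just another instance of the local-unitary invariance already proved, and follows with no further work. Alternatively, one can compute $\hat\rho_\A$ for $\ket{\diag;\theta,\phi}$ explicitly and observe that it equals $\mathrm{diag}(\cos^2\tfrac\theta2,\sin^2\tfrac\theta2)$ independently of $\phi$, so the entropy is manifestly $\phi$-independent.

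I do not expect a serious obstacle here: the result is essentially the standard fact that entanglement entropy is a local-unitary invariant. The only point requiring mild care is the basis-independence built into the definition~\eqref{eq:reduced_density}, namely justifying that one may reabsorb $\hat U_\B$ by re-choosing the ONB $`{\ket\beta}$ used to take the partial inner product; once that is noted, the conjugation identity $\hat\rho'_\A=\hat U_\A\hat\rho_\A\hat U_\A^\dagger$ and hence the spectral invariance are immediate.
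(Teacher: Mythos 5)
Your proposal is correct and follows essentially the same route as the paper: the paper likewise computes $\hat\rho_\A$ for $\ket{\diag;\theta,\phi}$ explicitly (finding it $\phi$-independent) and then evaluates $\hat\rho_\A$ for $\ket{\psi}$ against the rotated basis $\hat U_\B`{\ket0,\ket1}$ to obtain the conjugated form $\hat U_\A(\cos^2\tfrac\theta2\proj0+\sin^2\tfrac\theta2\proj1)\hat U_\A^\dagger$, concluding by spectral invariance of the entropy. Your framing of the argument as a general local-unitary invariance principle is a mild repackaging, not a different proof.
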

\begin{proof}
Compute the entanglement entropy of the diagonal state for arbitrary $\phi$ using the ONB $`{\ket{0}, \ket{1}}$:
\begin{align}
\hat \rho_\A &\overset{\eqref{eq:diag_state}}=\cos^2 \tfrac \theta2 \proj{0} +\sin^2 \tfrac \theta2 \proj{1} \\
\Rightarrow
S_\ent(\ket{\diag;\theta,\phi})
&=-2\big(\cos^2 \tfrac \theta2 \lg_2 \cos \tfrac \theta2 \notag \\
&\qquad +\sin^2 \tfrac \theta2 \lg_2 \sin \tfrac \theta2\big). \label{eq:EE_diag_state}
\end{align}
Compute also $S_\ent(\ket{\psi})$, this time using the ONB $\hat U_\B`{\ket{0}, \ket{1}}$, with $\hat U_\B$ defined by \cref{eq:2Q_Schmidt_decomposition}. This yields $\hat \rho_\A =\hat U_\A `\big(\cos^2 \tfrac \theta2 \proj{0} +\sin^2 \tfrac \theta2 \proj{1}) \hat U_\A^\dag$ and so $S_\ent(\ket{\psi})$ is identical to $S_\ent(\ket{\diag; \theta, \phi})$.
\end{proof}
 
\begin{figure}
	\centering
	\includegraphics[width=0.45\textwidth]{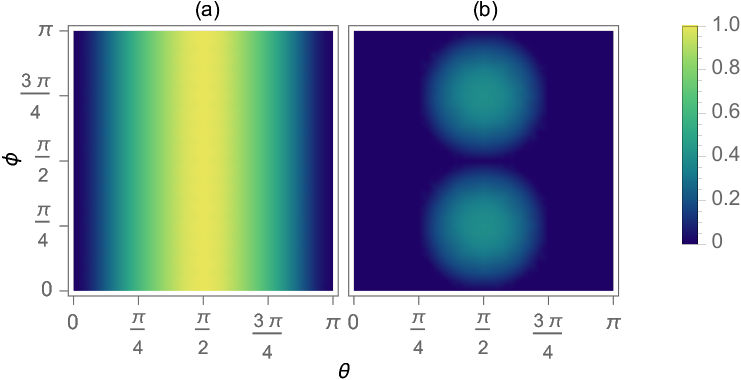}
	\caption{(a) Entanglement entropy of the diagonal state $\cos\tfrac\theta2 \ket{00} +\e^{\i \phi}\sin\tfrac\theta2 \ket{11}$. (b) The contextual fraction of said state with respect to \TC{the Bell scenario of} Pauli $x$ and $y$ basis measurements. The peak at $(\pi/2, \pi/4)$ has a contextual fraction of $\sqrt2 -1$ and corresponds to the empirical model described in \cref{rem:rotate_each_qubit}.}
	\label{fig:EE_CF_diag}
\end{figure}
\noindent
The entanglement entropy \cref{eq:EE_diag_state} of the diagonal state
is plotted in \cref{fig:EE_CF_diag}a as a function of $\theta$ and $\phi$,
while its contextual fraction with respect to $\mathcal S(\mathcal B_x, \mathcal B_y)$
is shown in \cref{fig:EE_CF_diag}b. We see a positive correlation: maxima of contextuality coincide with maxima of entanglement. 

\subsection{\TC{Introducing the QTCF}}

\TC{As previously mentioned,
computing $\maxCF$ is too computationally intensive.
We work around this difficulty by defining a unique Bell scenario from the state
and computing the corresponding contextual fraction.
Clearly,
we would like the scenario to try to maximise the contextuality.
The idea of the following scenario
is to map $\ket{\psi}$ to somewhere on the line $\phi =\pi/4$ in \cref{fig:EE_CF_diag},
as this was the angle (as well as $\phi =3 \pi/4$)
that maximised the contextual fraction for a given $\theta$ in a diagonal state.

\begin{definition}
    Let $\ket{\psi}$ be a 2-qubit state.
    Extract the local unitaries $\hat U_\A, \hat U_\B$ from $\ket{\psi}$
    via the Schmidt decomposition \cref{eq:2Q_Schmidt_decomposition}.
    The \emph{quarter-turn scenario} of $\ket{\psi}$
    is a $(2, 2, 2)$ Bell scenario $\mathcal S_\psi :=`{a_1, a_2} \times `{b_1, b_2}$ in which
    we set Alice's pair of ONBs as
    $`{\mathcal B_{\pi/8}, \mathcal B_{5 \pi/8}}$ rotated by $\hat U_\A$:
    \begin{equation}\label{eq:QTCF_A_basis}
    `{a_1, a_2} =\hat U_\A `{\mathcal B_{\pi/8}, \mathcal B_{5 \pi/8}},
    \end{equation} 
    and similarly for Bob:
    \begin{equation}\label{eq:QTCF_B_basis}
    `{b_1,b_2} =\hat U_\B `{\mathcal B_{\pi/8}, \mathcal B_{5 \pi/8}}.
    \end{equation}
\end{definition}

\noindent
We next prove that this indeed maps the space of all 2-qubit states
to the line $\phi =\pi/4$ in \cref{fig:EE_CF_diag}.

\begin{lem}\label{lem:model_identity}
    The model $e$ of a 2-qubit state $\ket{\psi}$ with respect to $\mathcal S_\psi$
    is identical to the model $f$ of $\ket{\diag; \theta, \pi/4}$
    with respect to $\mathcal S(\mathcal B_x, \mathcal B_y)$,
    where $\theta$ depends on $\ket{\psi}$ via \cref{eq:2Q_Schmidt_decomposition}.
\end{lem}

\begin{proof}
If e.g.\ from $\mathcal S_\psi$
Alice chooses $a_1$ and Bob chooses $b_2$, the 01 state is $\ket{0_1}\ket{1_2} =\hat U_\A \ket{0_{\pi/8}} \otimes \hat U_\B \ket{1_{5 \pi/8}}$. Substituting this and \cref{eq:2Q_Schmidt_decomposition} into \cref{eq:Born} and noting that $\hat U_\A, \hat U_\B$ cancel due to unitarity, we find
\begin{align}
e_{a_1 b_2}(01) &=\abs{`<0_{\pi/8} 1_{5 \pi/8}|{\diag}; \theta, 0>}^2 \notag\\
&\overset{\textnormal{\cref{rem:rotate_each_qubit}}}=
\abs{`<0_x 1_y|{\diag}; \theta, \pi/4>}^2 \notag\\
&=f_{a_1 b_2}(01). \label{eq:max_model_eg}
\end{align}
Similar calculations hold for all other probabilities in the model,
so $e \equiv f$.
\end{proof}

Having defined a state-dependent scenario,
we can define its corresponding contextual fraction.

\begin{definition}\label{def:QTCF}
    The QTCF of a 2-qubit state $\ket{\psi}$ is
    the contextual fraction of $\ket{\psi}$ with respect to
    its quarter-turn scenario $\mathcal S_\psi$.
\end{definition}

\noindent
In this way, we obtain a quantity that depends only on the state,
which allows us to compare with the entanglement entropy.
While we have not shown that $\QTCF =\maxCF$,
we nevertheless have an way for associating a definite measure of contextuality with any 2-qubit state.
Moreover, we show in the next subsection that QTCF is an entanglement monotone.
Combining \cref{lem:model_identity} with \cref{def:QTCF}
gives the following shortcut to computing QTCF in practice.

\begin{cor}\label{cor:QTCF_identity}
$\QTCF(\ket{\psi})$ equals
the contextual fraction of $\ket{\diag; \theta, \pi/4}$
with respect to $\mathcal S(\mathcal B_x, \mathcal B_y)$,
where $\theta$ depends on $\ket{\psi}$ via \cref{eq:2Q_Schmidt_decomposition}.
\end{cor}
}

\subsection{\TC{QTCF as an Entanglement Monotone}}
Equipped with \TC{QTCF},
we can show that an entangled state has a higher capacity to be contextual than a less entangled one:
\begin{prop}\label{prop:maxCF_is_monotone_WRT_EE}
\TC{QTCF} monotonically grows with entanglement entropy,
that is for any pair $\ket{\psi}, \ket{\chi}$ of 2-qubit states, $S_\ent(\ket{\psi}) >S_\ent(\ket{\chi})$ implies
\begin{equation}
\TC{\QTCF(\ket{\psi}) \ge\QTCF(\ket{\chi}).}
\end{equation}
\end{prop}
\begin{proof}
Consider any 2-qubit state $\ket{\psi}$ and its $\theta$ parameter defined by \cref{eq:2Q_Schmidt_decomposition}.
By \cref{lem:EE_general_2Q_state}, we have $S_\ent(\ket{\psi}) =S_\ent(\ket{\diag; \theta, \pi/4})$.
\TC{Given this and \cref{cor:QTCF_identity},
$\theta$ alone determines both entanglement entropy and QTCF.}
This dependence is plotted in \cref{fig:Monotone}a,
obtained as a cross-section of \cref{fig:EE_CF_diag}b at the line $\phi =\pi/4$.
By inverting $S_\ent(\ket{\psi})$ as a function of $\theta$,
we can find \TC{$\QTCF(\ket{\psi})$} as a function of $S_\ent(\ket{\psi})$,
which is plotted in \cref{fig:Monotone}b.
This function grows monotonically.
\begin{figure}
	\centering
	\includegraphics[width=0.45\textwidth]{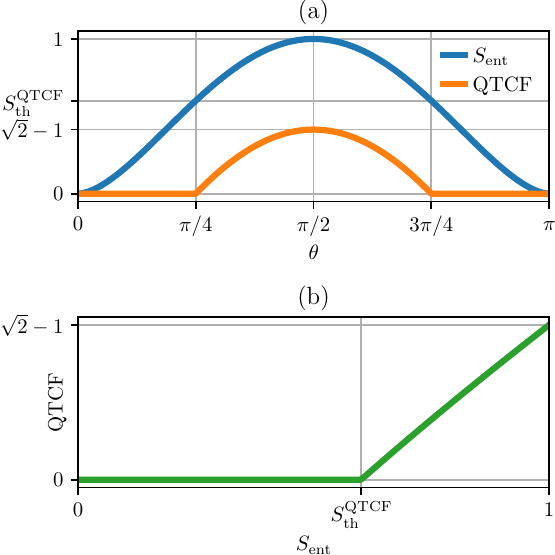}
	\caption{(a) Entanglement entropy and \TC{QTCF} as functions of the state's $\theta$-parameter defined by the Schmidt decomposition \cref{eq:2Q_Schmidt_decomposition}. (b) \TC{QTCF} as a function of entanglement entropy; $S_\th^{\QTCF}$ is the entanglement entropy for $\theta = \pi/4, 3 \pi/4$, given by \cref{eq:S_th_value}.}
	\label{fig:Monotone}
\end{figure}
\end{proof}

This analysis suggests that an increasing amount of contextuality requires an increasing amount of entanglement, at least in the case of two qubits.
\TC{
\noindent We now prove our main result.
\begin{theorem}
    QTCF is an entanglement monotone.
\end{theorem}
\begin{proof}
    We show that QTCF satisfies all three conditions of \cref{def:entanglement_monotone}.
    \begin{enumerate}
        \item The contextual fraction is in $[0, 1]$ so QTCF is nonnegative.
        \item By \cref{lem:separable_implies_noncontextual},
        any separable state has zero contextual fraction for all Bell scenarios,
        so its QTCF will vanish.
        \item Entanglement entropy is an entanglement monotone so does not increase under LOCC.
        So by \cref{prop:maxCF_is_monotone_WRT_EE},
        QTCF also cannot increase under LOCC. \qedhere
    \end{enumerate}
\end{proof}
}

\noindent
In addition to monotonicity,
\cref{fig:Monotone} suggests a threshold level of entanglement
required for \TC{nonzero QTCF:
let $A_{\QTCF}$ be the set of all 2-qubit states with $\QTCF({\ket{\psi}}) >0$ then}
\begin{align}
S_\th^{\QTCF} &\TC{:=\inf`{S_\ent(\ket{\psi}): \ket{\psi} \in A_{\QTCF}}} \notag\\
&=S_\ent`\Big(\ket[\big]{\diag; \frac \pi4, \phi}) \notag \\
&\overset{\eqref{eq:EE_diag_state}}=\frac1{2^2} `*[6 +\sqrt2 \lg_2 `\big(3-2 \sqrt2)] \approx 0.601. \label{eq:S_th_value}
\end{align}
\TC{Though this is not a threshold for
contextuality of the 2-qubit system across \emph{all} Bell scenarios,
we can use it as an upper bound:

\begin{prop}
    $S_\th^{\QTCF} \approx 0.601$ is an upper bound for the minimum entanglement entropy
    a 2-qubit state needs for contextuality with respect to at least one Bell scenario.
    More concretely,
    let $A_{\maxCF}$ be the set of all 2-qubit states $\ket{\psi}$ with $\maxCF(\ket{\psi}) >0$,
    and
    \begin{equation}
        S_\th^{\maxCF} :=\inf`{S_\ent(\ket{\psi}): \ket{\psi} \in A_{\maxCF}},
    \end{equation}
    then $S_\th^{\maxCF} \le S_\th^{\QTCF}$.
\end{prop}

\begin{proof}
    By their definitions,
    $\maxCF \ge \QTCF$ always,
    so $A_{\QTCF} \subseteq A_{\maxCF}$.
\end{proof}}

\section{Conclusion}\label{Conclusion}
We have compared entanglement entropy with the contextual fraction
and proved \TC{that multipartite states contextual under local measurements} must be entangled.
Since contextuality is necessary for quantum speedup the implication is that so is entanglement. This agrees with Jozsa \& Linden \cite{Jozsa2003}, who showed that some nonzero entanglement is necessary for exponential quantum speedup.

We have also developed \TC{the notion of quarter-turn contextual fraction (QTCF)} for any 2-qubit state
\TC{and showed that it is an entanglement monotone.}
This suggests that a large quantity of entanglement is necessary, e.g., in measurement-based quantum computing.
\TC{The QTCF} brings together two separate hierarchies, entanglement and contextuality, into a single picture.
\TC{Lastly,
we have found an upper bound for the minimum entanglement entropy a 2-qubit state needs
for contextuality with respect to some Bell scenario.
We do not know if this bound is sharp;
it would be interesting future work to try tightening this bound.
The quarter-turn scenario and thus the QTCF
generalise naturally to \num{>2}-qubit states
via higher-order singular value decomposition.
Another avenue of further investigation would be
to compare this generalised QTCF with \num{>2}-qubit entanglement measures.
}

\vspace{21pt}
{\bfseries Acknowledgements.} We are grateful to Samson Abramsky and Carmen Constantin for useful discussions. \AC{We also thank the anonymous referees for valuable comments and suggestions, which have contributed to clarifying and improving the manuscript.} TC would like to express his gratitude to the Oxford Physics Department for the award of the Gibbs Prize for the Best MPhys Research Project in 2022, on which this work is based. AC's research is supported by a Royal Society Dorothy Hodgkin Fellowship. 

\bibliography{tchbib2}

\providecommand{\href}[2]{#2}\begingroup\raggedright\begin{thebibliography}{10}

\bibitem{Shor1994}
P.~W. Shor, ``Algorithms for quantum computation: discrete logarithms and
  factoring,'' in {\em Proceedings 35th Annual Symposium on Foundations of
  Computer Science}, pp.~124--134.
\newblock 1994.

\bibitem{Shor_1997}
P.~W. Shor, ``Polynomial-Time Algorithms for Prime Factorization and Discrete
  Logarithms on a Quantum Computer,'' {\em SIAM Journal on Computing} {\bf 26}
  (Oct., 1997) 1484?1509, \href{http://arXiv.org/abs/9508027}{{\tt 9508027}}.

\bibitem{Grover1996}
L.~K. Grover, ``A Fast Quantum Mechanical Algorithm for Database Search,'' in
  {\em Proceedings of the Twenty-Eighth Annual ACM Symposium on Theory of
  Computing}, STOC '96, pp.~212--219.
\newblock Association for Computing Machinery, New York, NY, USA, 1996.
\newblock \href{http://arXiv.org/abs/9605043}{{\tt 9605043}}.

\bibitem{Childs2002}
A.~Childs, E.~Farhi, and S.~Gutmann, ``An example of the difference between
  quantum and classical random walks,'' {\em Quantum Inform. Process.} {\bf 1}
  (2002) 35--43, \href{http://arXiv.org/abs/0103020}{{\tt 0103020}}.

\bibitem{Kempe2005}
J.~Kempe, ``Quantum random walks hit exponentially faster,'' {\em Probab.
  Theory Rel. Fields} {\bf 133} (2005) 215--235,
  \href{http://arXiv.org/abs/0205083}{{\tt 0205083}}.

\bibitem{Buluta2009}
I.~{Buluta} and F.~{Nori}, ``{Quantum Simulators},'' {\em Science} {\bf 326}
  (Oct., 2009) 108.

\bibitem{Harrow2009}
A.~W. Harrow, A.~Hassidim, and S.~Lloyd, ``Quantum Algorithm for Linear Systems
  of Equations,'' {\em Phys. Rev. Lett.} {\bf 103} (Oct, 2009) 150502,
  \href{http://arXiv.org/abs/0811.3171}{{\tt 0811.3171}}.

\bibitem{Brown2010}
K.~L. {Brown}, W.~J. {Munro}, and V.~M. {Kendon}, ``{Using Quantum Computers
  for Quantum Simulation},'' {\em Entropy} {\bf 12} (Nov., 2010) 2268--2307,
  \href{http://arXiv.org/abs/1004.5528}{{\tt 1004.5528}}.

\bibitem{Georgescu2014}
I.~M. Georgescu, S.~Ashhab, and F.~Nori, ``Quantum simulation,'' {\em Rev. Mod.
  Phys.} {\bf 86} (Mar, 2014) 153--185,
  \href{http://arXiv.org/abs/1308.6253}{{\tt 1308.6253}}.

\bibitem{Farhi2014}
E.~Farhi, J.~Goldstone, and S.~Gutmann, ``A Quantum Approximate Optimization
  Algorithm,'' 2014.

\bibitem{Jozsa2003}
R.~Jozsa and N.~Linden, ``On the role of entanglement in quantum-computational
  speed-up,'' {\em Proceedings of the Royal Society of London. Series A:
  Mathematical, Physical and Engineering Sciences} {\bf 459} (Aug., 2003)
  2011--2032, \href{http://arXiv.org/abs/0201143}{{\tt 0201143}}.

\bibitem{Curty2004}
M.~Curty, M.~Lewenstein, and N.~L\"utkenhaus, ``Entanglement as a Precondition
  for Secure Quantum Key Distribution,'' {\em Phys. Rev. Lett.} {\bf 92} (May,
  2004) 217903, \href{http://arXiv.org/abs/0307151}{{\tt 0307151}}.

\bibitem{Howard2014}
M.~Howard, J.~Wallman, V.~Veitch, and J.~Emerson, ``Contextuality supplies the
  `magic' for quantum computation,'' {\em Nature} {\bf 510} (June, 2014)
  351--355, \href{http://arXiv.org/abs/1401.4174}{{\tt 1401.4174}}.

\bibitem{Raussendorf2013}
R.~Raussendorf, ``Contextuality in measurement-based quantum computation,''
  {\em Physical Review A} {\bf 88} (Aug., 2013) 022322,
  \href{http://arXiv.org/abs/0907.5449}{{\tt 0907.5449}}.

\bibitem{Budroni2021}
C.~Budroni, A.~Cabello, O.~G{\"u}hne, M.~Kleinmann, and J.-{\AA}. Larsson,
  ``Kochen-Specker contextuality,'' {\em Reviews of Modern Physics} {\bf 94}
  (Dec., 2022) \href{http://arXiv.org/abs/2102.13036}{{\tt 2102.13036}}.

\bibitem{kochen_specker_1967}
S.~Kochen and E.~P. Specker, ``The Problem of Hidden Variables in Quantum
  Mechanics,'' {\em Journal of Mathematics and Mechanics} {\bf 17} (1967)
  59--87.

\bibitem{spekkens_2005}
R.~W. Spekkens, ``Contextuality for Preparations, Transformations, and Unsharp
  Measurements,'' {\em Physical Review A} {\bf 71} (2005), no.~5, 052108,
  \href{http://arXiv.org/abs/0406166}{{\tt 0406166}}.

\bibitem{Abramsky2011}
S.~Abramsky and A.~Brandenburger, ``The sheaf-theoretic structure of
  non-locality and contextuality,'' {\em New Journal of Physics} {\bf 13}
  (Nov., 2011) 113036, \href{http://arXiv.org/abs/1102.0264}{{\tt 1102.0264}}.

\bibitem{Dzhafarov2016_2}
E.~N. Dzhafarov, J.~V. Kujala, and V.~H. Cervantes, ``Contextuality-by-Default:
  A Brief Overview of Ideas, Concepts, and Terminology,'' in {\em Quantum
  Interaction}, H.~Atmanspacher, T.~Filk, and E.~Pothos, eds., pp.~12--23.
\newblock Springer International Publishing, Cham, 2016.
\newblock \href{http://arXiv.org/abs/1504.00530}{{\tt 1504.00530}}.

\bibitem{Cabello2014}
A.~Cabello, S.~Severini, and A.~Winter, ``Graph-Theoretic Approach to Quantum
  Correlations,'' {\em Physical Review Letters} {\bf 112} (Jan., 2014) 040401,
  \href{http://arXiv.org/abs/1401.7081}{{\tt 1401.7081}}.

\bibitem{Acin2015}
A.~Acin, T.~Fritz, A.~Leverrier, and A.~B. Sainz, ``A Combinatorial Approach to
  Nonlocality and Contextuality,'' {\em Commun. Math. Phys.} {\bf 334} (2015)
  533--628, \href{http://arXiv.org/abs/1212.4084}{{\tt 1212.4084}}.

\bibitem{Plavala2022}
M.~{Pl{\'a}vala} and O.~{G{\"u}hne}, ``{Contextuality as a precondition for
  entanglement},'' {\em arXiv e-prints} (Sept., 2022)
  \href{http://arXiv.org/abs/2209.09942}{{\tt 2209.09942}}.

\bibitem{wright2023contextuality}
V.~J. Wright and R.~Kunjwal, ``Contextuality in composite systems: the role of
  entanglement in the Kochen-Specker theorem,'' {\em Quantum} {\bf 7} (2023)
  900, \href{http://arXiv.org/abs/2109.13594}{{\tt 2109.13594}}.

\bibitem{Constantin2015}
C.~M. Constantin, {\em Sheaf-Theoretic Methods in Quantum Mechanics and Quantum
  Information Theory}.
\newblock {DPhil} thesis, University of Oxford, 2015.
\newblock \href{http://arXiv.org/abs/1510.02561}{{\tt 1510.02561}}.

\bibitem{ButterfieldIsham1998}
J.~Butterfield and C.~J. Isham, ``A topos perspective on the Kochen-Specker
  theorem: {I}. Quantum states as generalized valuations,'' {\em International
  Journal of Theoretical Physics} {\bf 37} (1998), no.~11, 2669,
  \href{http://arXiv.org/abs/9803055}{{\tt 9803055}}.

\bibitem{Fine1982}
A.~Fine, ``Hidden Variables, Joint Probability, and the {Bell} Inequalities,''
  {\em Physical Review Letters} {\bf 48} (Feb., 1982) 291--295.

\bibitem{Abramsky2017}
S.~Abramsky, R.~S. Barbosa, and S.~Mansfield, ``Contextual Fraction as a
  Measure of Contextuality,'' {\em Physical Review Letters} {\bf 119} (Aug.,
  2017) 050504, \href{http://arXiv.org/abs/1705.07918}{{\tt 1705.07918}}.

\bibitem{Bennett1993}
C.~H. Bennett, G.~Brassard, C.~Cr\'epeau, R.~Jozsa, A.~Peres, and W.~K.
  Wootters, ``Teleporting an unknown quantum state via dual classical and
  {Einstein}-{Podolsky}-{Rosen} channels,'' {\em Physical Review Letters} {\bf
  70} (Mar., 1993) 1895--1899.

\bibitem{Ekert1991}
A.~K. Ekert, ``Quantum cryptography based on {Bell}'s theorem,'' {\em Physical
  Review Letters} {\bf 67} (Aug., 1991) 661--663.

\bibitem{VandenNest2013}
M.~Van~den Nest, ``Universal Quantum Computation with Little Entanglement,''
  {\em Physical Review Letters} {\bf 110} (Feb., 2013) 060504,
  \href{http://arXiv.org/abs/1204.3107}{{\tt 1204.3107}}.

\bibitem{BermejoVega2017}
J.~Bermejo-Vega, N.~Delfosse, D.~E. Browne, C.~Okay, and R.~Raussendorf,
  ``Contextuality as a Resource for Models of Quantum Computation with
  Qubits,'' {\em Physical Review Letters} {\bf 119} (Sept., 2017) 120505,
  \href{http://arXiv.org/abs/1610.08529}{{\tt 1610.08529}}.

\bibitem{Horodecki2000}
M.~Horodecki, P.~Horodecki, and R.~Horodecki, ``Limits for Entanglement
  Measures,'' {\em Physical Review Letters} {\bf 84} (Feb., 2000) 2014--2017,
  \href{http://arXiv.org/abs/9908065}{{\tt 9908065}}.

\bibitem{Plenio2007}
M.~B. Plenio and S.~Virmani, ``An introduction to entanglement measures,'' {\em
  Quantum Information and Computation} {\bf 7} (2007), no.~1-2, 1--51.

\bibitem{Horodecki2009}
R.~Horodecki, P.~Horodecki, M.~Horodecki, and K.~Horodecki, ``Quantum
  entanglement,'' {\em Review of Modern Physics} {\bf 81} (June, 2009)
  865--942.

\bibitem{Abramsky_2012}
S.~Abramsky and L.~Hardy, ``Logical Bell inequalities,'' {\em Physical Review
  A} {\bf 85} (June, 2012) \href{http://arXiv.org/abs/1203.1352}{{\tt
  1203.1352}}.

\bibitem{Abramsky_2014}
S.~Abramsky and C.~Constantin, ``A classification of multipartite states by
  degree of non-locality,'' {\em Electronic Proceedings in Theoretical Computer
  Science} {\bf 171} (Dec., 2014) 10?25,
  \href{http://arXiv.org/abs/1412.5213}{{\tt 1412.5213}}.

\end{thebibliography}\endgroup
\bibliographystyle{utcaps}
\end{document}